\setlist[enumerate,1]{%
  label=\arabic*.,
}
\newlist{inlinelist}{enumerate*}{1}
\setlist*[inlinelist,1]{%
  label=(\roman*),
}
\newcommand{\LineIfElse}[3]{ \STATE \algorithmicif\ {#1}\ \algorithmicthen\ {#2} \algorithmicelse\ {#3} }
\newcommand{\algorithmiccommentt}[1]{\colorbox{black!10}{#1}}
\newtheorem{theorem}{Theorem}
\newtheorem{claim}[theorem]{Claim}
\newcommand{\subsetsum}{\textsc{Subset Sum}}
\numberwithin{equation}{section}
\newtheorem*{rep@theorem}{\rep@title}
\newcommand{\newreptheorem}[2]{%
\newenvironment{rep#1}[1]{%
 \def\rep@title{#2 \ref{##1}}%
 \begin{rep@theorem}[restated]}%
 {\end{rep@theorem}}}
\begin{document}
\title{A short note on Merlin-Arthur protocols for subset sum}
\author{Jesper Nederlof\thanks{Department of Mathematics and Computer Science, Technical University of Eindhoven, The Netherlands, \texttt{j.nederlof@tue.nl}}}
\maketitle

In the \subsetsum{} problem we are given positive integers $w_1,\ldots,w_n$ along with a target integer $t$. The task is to determine whether there exists a subset $X \subseteq \{1,\ldots,n\}$ such that $\sum_{i \in X}w_i=t$. We refer to such an $X$ as a \emph{solution}. The goal of this short note is to prove the following:

\begin{theorem}\label{mainthm}
For a given \subsetsum{} instance $w_1,\ldots,w_n,t$ there is a proof of size\footnote{$O^*(\cdot)$ omits factors polynomial in the input size.} $O^*(\sqrt{t})$ of what the number of solutions is that can be constructed in $O^*(t)$ time and can be probabilistically verified in time $O^*(\sqrt{t})$ with at most constant error probability.
\end{theorem}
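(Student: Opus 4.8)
The plan is to let the proof be a suitable sub-sampling of the \subsetsum{} dynamic-programming table, and to have the verifier check it by a single Schwartz--Zippel test built around a root-of-unity filter.

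Set $f(x)=\prod_{i=1}^{n}(1+x^{w_i})$, a polynomial of degree $W:=\sum_i w_i$, and write $d_j:=[x^j]f(x)$ for $0\le j\le W$, so the number of solutions is exactly $d_t$. I may assume $w_i\le t$ for all $i$ (items with $w_i>t$ lie in no solution), hence $W\le nt$. Fix a prime $B$ with $\sqrt{t+1}\le B\le 2\sqrt{t+1}$ (which exists by Bertrand's postulate), so that $m:=\lfloor t/B\rfloor<B$; put $r^*:=t\bmod B$ (so $t=r^*+mB$) and $K:=\lfloor(W-r^*)/B\rfloor\le n\sqrt{t}$; and work over $\mathbb{F}_p$ for a prime $p\equiv 1\pmod B$ with $p>\max(2^n,2W)$. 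Such a $p$, of bit-length $\mathrm{poly}(n,\log t)$, together with a primitive $B$-th root of unity $\zeta\in\mathbb{F}_p$, is easily found (or supplied with the proof and checked, primitivity being trivial to certify since $B$ is prime), and since $p>2^n$ the residue $d_t\bmod p$ is the exact count.

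The proof is the single residue class $\mathbf e=(e_0,\dots,e_K)$ of the DP table, $e_k:=d_{r^*+kB}$; note that $e_m=d_t$ is the asserted number of solutions. It has $K+1=O^*(\sqrt t)$ field entries, so its size is $O^*(\sqrt t)$, and Merlin produces it by running the standard \subsetsum{} DP up to $W$ in $O(nW)=O^*(t)$ time and keeping every $B$-th coefficient. To verify, Arthur checks $m\le K$ (else the answer is $0$), draws a uniform $\rho\in\mathbb{F}_p$, and tests whether
\[
\sum_{k=0}^{K} e_k\,\rho^{\,r^*+kB}\;=\;\frac{1}{B}\sum_{j=0}^{B-1}\zeta^{-jt}\,\prod_{i=1}^{n}\bigl(1+(\zeta^{j}\rho)^{w_i}\bigr),
\]
accepting, and outputting $e_m$, iff this holds. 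The point is that $\tfrac{1}{B}\sum_{j=0}^{B-1}\zeta^{-jt}f(\zeta^j x)$ equals, identically in $\mathbb{F}_p[x]$, the sub-sum of $f$ over exponents $\equiv t\bmod B$, that is $\sum_{k=0}^{K} d_{r^*+kB}\,x^{r^*+kB}$. Hence for an honest $\mathbf e$ the test passes for every $\rho$ (perfect completeness), while if $\mathbf e$ is wrong in any entry --- in particular whenever Merlin misreports the count, forcing $e_m\ne d_t$ --- the two sides are distinct polynomials of degree $\le W$, so they agree at a random $\rho$ with probability $\le W/p<\tfrac{1}{2}$ (soundness). The left-hand side costs $O(K)=O^*(\sqrt t)$ field operations and the right-hand side $O(Bn\log t)=O^*(\sqrt t)$ (one fast exponentiation per pair $(i,j)$), so verification runs in $O^*(\sqrt t)$ time; the error is driven below any constant by repeating with independent $\rho$ (or by enlarging $p$).

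The step I expect to be the real obstacle is the verification: Arthur cannot afford to reconstruct $f(x)\bmod x^{t+1}$ (that already costs $\Omega(t)$) and can only evaluate $f$ itself at a few points, yet he must turn such evaluations into a sub-linear check of the single coefficient $d_t$. The device that makes it go through is to isolate one arithmetic-progression slice of the coefficient sequence of $f$ with a $B$-term root-of-unity filter, $B\approx\sqrt t$, to have Merlin commit to that entire slice (which is short, since it has only $\approx W/B=O^*(\sqrt t)$ entries), and to check the slice against $f$ by one polynomial-identity test; the rest --- choosing $B$ prime so that primitive roots are trivial to certify, choosing $p$ large enough for exactness and for Schwartz--Zippel, and the degenerate cases $t=0$ and $W<t$ --- is routine bookkeeping.
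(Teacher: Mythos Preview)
Your proof is correct and follows the same blueprint as the paper: the certificate is the slice $(d_j)_{j\equiv t\pmod B}$ of the DP counts for a prime $B\approx\sqrt t$, and verification is a single Schwartz--Zippel comparison of Merlin's claimed slice against the true one, both viewed as polynomials evaluated at a random point. Where you diverge is in how Arthur computes the ``true'' side of the identity. The paper has Arthur rerun the Bellman recursion itself, but with weight-indices reduced modulo the small prime and entries scaled by $r^{w_j}$; this directly produces $\sum_{j\equiv t}d_j r^j$ in $O(nB)$ field operations over an \emph{arbitrary} prime $q>2^nt$. You instead treat $f(x)=\prod_i(1+x^{w_i})$ as a black box and extract the residue class by a $B$-term root-of-unity filter, at the price of $B$ point evaluations of $f$. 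Both routes cost $O^*(\sqrt t)$. Your version is arguably cleaner conceptually --- it makes the polynomial-evaluation structure explicit and would apply verbatim to any efficiently evaluable generating function, not just a product --- but it pays with the extra arithmetic requirement of a field containing a primitive $B$-th root of unity, hence the constraint $p\equiv 1\pmod B$ and the need for Merlin to supply (and Arthur to certify) both $p$ and $\zeta$. The paper's index-reduced DP sidesteps all of that and needs only Bertrand/PNT for its two primes.
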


In complexity theory, a proof system as above is commonly referred to as a \emph{Merlin-Arthur protocol}. These protocols very recently received attention~\cite{2016arXiv160104743W, 2016arXiv160201295B} in the exponential time setting: Williams~\cite{2016arXiv160104743W} gives very generic Merlin-Arthur protocols with verifiers more efficient than best known exponential time algorithms.
By no means we claim this note is innovative in any way: both our work and~\cite{2016arXiv160104743W} have a similar flavor and it is likely that Theorem~\ref{mainthm} is also obtainable via a clever application of~\cite[Theorem 1.1]{2016arXiv160104743W}, or at the very least our observation will be less surprising given~\cite{2016arXiv160104743W, 2016arXiv160201295B}.

\begin{proof}[Proof of Theorem~\ref{mainthm}]
The proof system is outlined in Algorithm~\ref{alg:proof}. Note that both the prover and the verifier algorithms are simple modifications of the well known pseudo-polynomial time dynamic programming algorithm for \subsetsum{} usually attributed to Bellman.
\begin{algorithm}
  \caption{The prover-verifier protocol.}
  \label{alg:proof}
  \begin{algorithmic}[1]
    \REQUIRE $\mathsf{P}(w_1,\ldots,w_n,t)$\hfill\algorithmiccommentt{Prove that the number of solutions is $c_t$.}
    \ENSURE Prime $p=\Theta(\sqrt{nt})$, integer $c_i$ for $i \leq nt,i \equiv_p t$ with $c_i = |\{ X \subseteq [n]: w(X)=i\}|$.
		\STATE Initiate $T[0,0]=1$ and $T[0,i]=0$ for $0<i\leq nt$.  
		\FOR{$j=1 \to n$}
			\FOR{$i=1 \to n\cdot t$}
			\LineIfElse{$i < w_j$}{$T[j,i] \gets T[j-1,i]$}{$T[j,i] \gets T[j-1,i] + T[j-1,i-w_j]$.}
			\ENDFOR
    \ENDFOR
	\STATE Pick smallest prime $p$ satisfying $2\sqrt{nt}<p < 4\sqrt{nt}$.\hfill\algorithmiccommentt{Brute-force and~\cite{Agrawal02primesis} is within time budget.}\label{pp1}
	\FOR{$i \leq n\cdot t$ such that $i \equiv_p t$}
		\STATE $c_i \gets T[n,i]$.
	\ENDFOR
  \end{algorithmic}
  \begin{algorithmic}[1]
    \REQUIRE $\mathsf{V}(w_1,\ldots,w_n,t, p, \{c_i\})$\hfill\algorithmiccommentt{Verify the proof for the number of solutions.}
    \ENSURE Yes if the proof is as output by $\mathsf{P}$, no with constant probability if the proof is `incorrect'.
		\STATE Pick a prime $q$ satisfying $2^n t < q < 2^{n+1}t$. \hfill\algorithmiccommentt{Sample and check primality until success.}\label{pp2}
		\STATE Pick a random element $r \in \mathbb{Z}_{q}$.
		\STATE Initiate $T'[0,0]=1$ and $T'[0,i]=0$ for $i \in \mathbb{Z}_p$.
		\FOR{$j=1 \to n$}
			\FOR{$i=1 \to n\cdot t$}
				\STATE $T'[j,i] \gets \left(T'[j-1,i] + r^{w_j}\cdot T'[j-1,(i-w_j) \% p]\right) \% q$.\label{moda}
			\ENDFOR
    \ENDFOR
	\STATE Compute $\sum_{i}c_ir^i \% q$.
	\LineIfElse{$\sum_{i}c_ir^i \equiv_q T'[n,t \% p]$}{\algorithmicreturn\ yes}{\algorithmicreturn\ no.}
  \end{algorithmic}
\end{algorithm}

Recall the Prime Number Theorem, that is used in Lines~\ref{pp1} and~\ref{pp2}:

\begin{claim}
\label{lem:prime-num}
For large enough integers $b$ 
there exist at least $2^b/b$ prime numbers $p$ in the
interval $2^b<p<2^{b+1}$.
\end{claim}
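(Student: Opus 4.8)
The plan is to read the claim off directly from the Prime Number Theorem, $\pi(x) = (1+o(1))\,x/\ln x$ as $x\to\infty$, where $\pi(x)$ denotes the number of primes at most $x$. The quantity to be lower bounded is exactly the number of primes in $(2^b,2^{b+1})$, namely $\pi(2^{b+1})-\pi(2^b)$ (both endpoints are powers of two and hence not prime, so whether the interval is open or closed is immaterial).

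Concretely, I would write $\pi(2^{b+1})=\tfrac{2^{b+1}}{(b+1)\ln 2}\bigl(1+\epsilon_1(b)\bigr)$ and $\pi(2^{b})=\tfrac{2^{b}}{b\ln 2}\bigl(1+\epsilon_2(b)\bigr)$ with $\epsilon_1(b),\epsilon_2(b)\to 0$, subtract, and put everything over the common denominator $b(b+1)\ln 2$. The numerator then becomes $2^b\bigl(b-1+2b\,\epsilon_1(b)-(b+1)\epsilon_2(b)\bigr)=2^b\,b\,(1+o(1))$, since the error part is $o(b)$, so that
\[
\pi(2^{b+1})-\pi(2^{b}) \;=\; \frac{2^{b}}{b+1}\cdot\frac{1+o(1)}{\ln 2} \;=\; \frac{2^{b}}{b}\cdot\frac{b}{b+1}\cdot\frac{1+o(1)}{\ln 2}.
\]
The point that makes the estimate go through is that $1/\ln 2 = 1.442\ldots > 1$: this constant factor of slack dominates the vanishing corrections $\tfrac{b}{b+1}\to 1$ and $(1+o(1))\to 1$, so for all sufficiently large $b$ the product $\tfrac{b}{b+1}\cdot\tfrac{1+o(1)}{\ln 2}$ is at least $1$, and hence $\pi(2^{b+1})-\pi(2^{b})\ge 2^b/b$.

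I do not expect a genuine obstacle here. The only subtlety worth spelling out is that the two $o(1)$ terms supplied by the Prime Number Theorem may a priori have opposite signs, so one should verify — as above — that even their worst-case combination is absorbed into the positive constant slack $1/\ln 2-1$, rather than attempting an exact cancellation of the leading $b$ terms. If one preferred to avoid invoking the full Prime Number Theorem, note that Chebyshev-type bounds $c_1 x/\ln x\le\pi(x)\le c_2 x/\ln x$ alone would not yield the constant $1$ in $2^b/b$; one really needs the asymptotic equivalence (or explicit effective bounds with constants close enough to $1$, such as those of Rosser and Schoenfeld).
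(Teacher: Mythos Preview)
Your derivation is correct: subtracting the two Prime Number Theorem asymptotics and noting that the constant $1/\ln 2>1$ absorbs the $b/(b+1)$ and $o(1)$ corrections is exactly the right computation, and your caution about the opposite signs of the error terms is well placed.

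The paper, however, does not prove this claim at all. It simply introduces it with ``Recall the Prime Number Theorem'' and states the bound as a known fact to be used in the algorithm. So there is nothing to compare your argument against---you have supplied a proof where the paper merely invokes the result. Your closing observation that Chebyshev-type inequalities with unspecified constants would not suffice for the constant $1$ in $2^b/b$ is correct and would be a worthwhile parenthetical if you were writing this up.
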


By this result, we may sample $2^n t < q < 2^{n+1}t$ and check primality until success and declare YES after $n+\lg(t)$ unsuccessful tries (which happens with probability at most $1/4$).

For the verification probability, it is straightforward to see that $T[n,i]$ equals $|\{ X \subseteq [n]: w(X)=j\}|$. Similarly, when considering $r$ as indeterminate, we see that $T'[n,t]$ is a polynomial in $\mathbb{Z}_q[r]$ satisfying
\[
	T'[n,t] \equiv_q \sum_{\substack{i \leq nt \\ i \equiv_p t}}|\{ X \subseteq [n]: w(X)=i\}| r^i.
\]
Also note that $|\{ X \subseteq [n]: w(X)=j\}| < q$. Thus we see that if the proof is correctly constructed as in $\mathsf{P}$, $\sum_{i}c_ir^i \% q$ and $T'[n,t]$ are equivalent polynomials and $\mathsf{V}$ accepts. On the other hand, if the proof is incorrect, e.g. $c_i \neq T[n,i]$ for some $i$, we see that $(\sum_{i}c_ir^i \% q) - T'[n,t]$ is not the zero polynomial and has degree at most $n\cdot t$. Thus, it has at most $n\cdot t$ roots in $\mathbb{Z}_q$, and the probability that $r$ is one of these roots is $nt/q < 1/4$. If $r$ is not a root of the difference polynomial, the two evaluations necessarily differ and the verifier rejects. Thus in total the verifier does not reject an incorrect proof with probability at most $1/2$.

For the running time of $\mathsf{P}$, note that all involved integers are at most $2^n$ and the running time bound follows directly. The running time of $\mathsf{V}$ follows similarly by using taking modulus at each powering step at Line~\ref{moda}. 
\end{proof}

Let us note that the above theorem also gives rise to similar results for parity versions of Set Partition, Set Cover, Hitting Set and CNF-Sat via the reductions of~\cite{DBLP:conf/coco/CyganDLMNOPSW12,DBLP:journals/corr/abs-1112-2275} (since the reduction in~\cite[Theorem 4.9]{DBLP:journals/corr/abs-1112-2275} from Subset Sum$/m$ to Set Partition is parsimonious). But all of this was also shown in a stronger form in \cite{2016arXiv160104743W, 2016arXiv160201295B}.

Also~\cite[Theorem 1.6]{DBLP:journals/corr/AustrinKKN15}, in combination with the above theorem gives an $O^*(2^{.499n})$-sized proof that can be probabilistically verified in time $O^*(2^{.499n})$ for \subsetsum{}. However, combining~\cite[Theorem A.1]{2016arXiv160104743W} with the methods from \cite{DBLP:journals/corr/AustrinKKN15} gives a more efficient proof system leading to a proof of size $O^*(2^{.3113n})$ that can be verified in time $O^*(2^{.3113n})$.

\paragraph{Acknowledgements} The author would like to thank Petteri Kaski for bringing the question of whether \subsetsum{} admits a Merlin-Arthur protocol with proof size and verification time $O^*(2^{(0.5-\epsilon)n})$ for $\epsilon >0$ to his attention. The author was not aware of the contents of \cite{2016arXiv160104743W, 2016arXiv160201295B} while observing this note in the beginning of November'16. This work is supported by NWO VENI project 639.021.438, and was done while visiting the Simons institute in the fall of 2015.
\bibliographystyle{plain}
\bibliography{masubsetsum}

\end{document}